\newtheorem{remark}{Remark}
\newtheorem{theorem}{Theorem}
\begin{document}
\title{A General Representation for the Green's Function of Second Order Nonlinear Differential Equations}

\author[1]{Marco Frasca}

\author[2, 3]{Asatur Zh. Khurshudyan\footnote{Email: khurshudyan@mechins.sci.am}}

\affil[1]{\small Via Erasmo Gattamelata 3, 00176 Roma, Italy}
\affil[2]{\small Department on Dynamics of Deformable Systems and Coupled Fields, Institute of Mechanics, National Academy of Sciences of Armenia, Yerevan, Armenia}
\affil[3]{\small Institute of Natural Sciences, Shanghai Jiaotong University, Shanghai, China}

\date{}

\maketitle

\begin{abstract}
In this paper we study some classes of second order non-homogeneous nonlinear differential equations allowing a specific representation for nonlinear Green's function. In particular, we show that if the nonlinear term possesses a special multiplicativity property, then its Green's function is represented as the product of the Heaviside function and the general solution of the corresponding homogeneous equations subject to non-homogeneous Cauchy conditions. Hierarchies of specific non-linearities admitting this representation are derived. The nonlinear Green's function solution is numerically justified for the sinh-Gordon and Liouville equations. We also list two open problems leading to a more thorough characterizations of non-linearities admitting the obtained representation for the nonlinear Green's function.

{\bf Keywords}: nonlinear Green's function; short time expansion; generalized separation of variables; traveling wave solution; homogeneous solution; nonlinear distribution.
\end{abstract}

\section{Introduction}

One of the most effective methods of analysis of \emph{linear} non-homogeneous ODEs and PDEs and their coupled systems is the well-known Green's function method \cite{Duffy2015}. If the Green's function of a differential equation, i.e., the general solution of the differential equation with the non-homogeneities substituted by the Dirac delta function, is known, then its general solution is represented in the form of the convolution of the Green's function and the non-homogeneities. Being based on the superposition principle, the Green's function method is believed to be applicable only to linear systems. However, an extension of the Green's function method to second order nonlinear differential equations has been developed by the first author of this paper alost a decade ago and applied in derivation of important results in quantum field theory. It has been hypothesized and \emph{numerically} justified in \cite{Frasca2006, Frasca2007, Frasca2008} that there exists a certain link between the following second order \emph{nonlinear} ordinary differential equations:
\begin{equation}\label{Frascaeq}
\frac{d^2 w}{d t^2} + N\left(w\right) = f\left(t\right), ~~ t > 0,
\end{equation}
and
\begin{equation}\label{Greeneq}
\frac{d^2 G}{d t^2} + N\left(G\right) = s \delta\left(t\right), ~~ t > 0,
\end{equation}
where $N$ is a generic nonlinear function, $f$ is a given function, $s$ is a scaling factor and $\delta$ is the Dirac distribution. That link has been made explicit in \cite{Frasca2008}. It turns out that the general solution of (\ref{Frascaeq}) is expressed in terms of the general solution of (\ref{Greeneq}) through the short time expansion
\begin{equation}\label{shorttex}
w\left(t\right) = \sum_{k = 0}^\infty \alpha_k \int_0^t \left(t - \tau\right)^k G\left(t - \tau\right) f\left(\tau\right) {\rm d}\tau,
\end{equation}
where $\alpha_k$, $k = 0, 1, 2, \dots,$ are the unknown expansion coefficients determined in terms of the quantities $w^{\left(k\right)}\left(0\right)$. Due to the similarity with the linear case, hereinafter, we will formally refer to the solution of (\ref{Greeneq}) as the nonlinear Green's function of (\ref{Frascaeq}).

It has been shown in \cite{Frasca2007} that the cubic and sine non-linearities
\[
N\left(w\right) = w^3
\]
and
\[
N\left(w\right) = \sin w,
\]
allow explicit determination of nonlinear Green's functions as follow:
\[
G\left(t\right) = 2^{\frac{1}{4}} \theta\left(t\right) \cdot \operatorname{sn}\left[\frac{t}{2^{\frac{1}{4}}}, i\right]
\]
and
\[
G\left(t\right) = 2\theta\left(t\right) \cdot \operatorname{am}\left[\frac{t}{\sqrt{2}},\sqrt{2}\right],
\]
respectively. Here $\theta$ is the Heaviside function
\[
\theta\left(t\right) = \begin{cases} 1, ~~ t > 0, \\ 0, ~~ t \leq 0, \end{cases}
\]
$\operatorname{sn}$ and $\operatorname{am}$ are the Jacobi snoidal and amplitude functions, respectively.

Moreover, it has been shown that the first term of (\ref{shorttex}) with $\alpha_0 = 1$ has been shown to provide an efficient numerical approximation for the solution of (\ref{Frascaeq}) for small $t$ \cite{Frasca2008, Khurshudyan2018, Khurshudyan2018AMP}.

Some new non-linearities admitting explicit and implicit determination of nonlinear Green's function have been considered in \cite{Khurshudyan2018, Khurshudyan2018AMP, Khurshudyan2018IJMPC}. Furthermore, it has been shown that the short time expansion (\ref{shorttex}) can be applied to represent the general solution of nonlinear PDEs in terms of their Green's function.

The main issue arising in consideration of Green's functions for nonlinear differential equations is the mathematical sense in which the solution of (\ref{Greeneq}) need to be considered. More precisely, because of the presence of the Dirac delta function in the right-hand side of (\ref{Greeneq}), it becomes obvious that its solution is a distribution, rather than a proper function. This means that the nonlinear term $N$ must be justified for distributions. Some aspects of the nonlinear theory of distributions can be found in \cite{Biagioni1990, Colombeau1992, Egorov1990, Ivanov2008}.

\section{Representation of the nonlinear Green's function in terms of the homogeneous solution}

For the non-linearities above, the nonlinear Green's function is determined as a product of the Heaviside function and Jacobi functions. This form strongly depends on the form of $N$. However, for instance, this does not hold for exponential non-linearity (see Section \ref{Liouville} below). As we show in this section, this representation holds for a specific class of non-linearities.

In order to state our main result, we introduce the following notations:
\[
\mathcal{D}_f = \left\{w : \mathbb{R}^+ \to \mathbb{R}; ~~ \frac{d^2 w}{d t^2} + N\left(w\right) = f\left(t\right), ~~ t > 0, ~~ w\left(0\right) = \frac{d w}{d t}\bigg|_{t = 0} = 0 \right\},
\]
\[
\mathcal{G}_s = \left\{G : \mathbb{R}^+ \to \mathbb{R}; ~~ \frac{d^2 G}{d t^2} + N\left(G\right) = s \delta\left(t\right), ~~ t > 0, ~~ G\left(0\right) = \frac{d G}{d t}\bigg|_{t = 0} = 0 \right\},
\]
\[
\mathcal{H}_s = \left\{w_0 : \mathbb{R}^+ \to \mathbb{R}; \frac{d^2 w_0}{d t^2} + N\left(w_0\right) = 0, ~~ t > 0, ~~ w_0\left(0\right) = 0, ~ \frac{d w_0}{d t}\bigg|_{t = 0} = s \right\}.
\]
We also introduce the following set of functions possessing the generalized multiplicability property:
\[
\mathcal{N} = \left\{ N : \mathbb{R} \to \mathbb{R}; ~~ N\left(\theta \cdot w\right) = \theta\left(t\right) \cdot N\left(w\right) \right\}.
\]
In this terminology, the short time expansion (\ref{shorttex}) provides a link between the elements of $\mathcal{G}_s$ and $\mathcal{D}_f$. The main result of this paper, stated in the following theorem, establishes a link between $\mathcal{G}_s$ and $\mathcal{H}_s$ and, thus, relates the solutions of a non-homogeneous equation and its homogeneous counterpart.

\begin{theorem}
If $N \in \mathcal{N}$, then any element of $\mathcal{G}_s$ admits the following representation:
\begin{equation}\label{Greenrep}
G\left(t\right) = \theta\left(t\right) w_0\left(t\right),
\end{equation}
where $w_0 \in \mathcal{H}_s$.
\end{theorem}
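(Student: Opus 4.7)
The plan is to show that the natural candidate $w_0 := G$, viewed as a function on $[0,\infty)$, already lies in $\mathcal{H}_s$, so that the identity $G(t) = \theta(t) w_0(t)$ holds trivially on $\mathbb{R}^+$. First I would observe that for every $t > 0$ the Dirac distribution $s\delta(t)$ vanishes, hence the defining equation of $\mathcal{G}_s$ reduces on $(0,\infty)$ to the homogeneous ODE $w_0'' + N(w_0) = 0$. This supplies the differential equation required for membership in $\mathcal{H}_s$; the remaining task is to extract the two Cauchy data $w_0(0) = 0$ and $w_0'(0) = s$ from the distributional content of the equation at the origin.

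To do this, I would regard $G$ as a distribution on all of $\mathbb{R}$ by extending it by zero for $t < 0$, so that $G = \theta w_0$. Differentiating in the sense of distributions via the Leibniz rule gives
\[
(\theta w_0)' = \theta w_0' + w_0(0)\delta,\qquad (\theta w_0)'' = \theta w_0'' + w_0'(0)\delta + w_0(0)\delta'.
\]
The hypothesis $N \in \mathcal{N}$ lets me replace $N(\theta w_0)$ by $\theta \cdot N(w_0)$, so inserting these expressions into $G'' + N(G) = s\delta$ yields
\[
\theta\bigl(w_0'' + N(w_0)\bigr) + w_0'(0)\delta + w_0(0)\delta' = s\delta.
\]
The bracketed term vanishes by the previous step, and since $\delta$ and $\delta'$ are linearly independent distributions supported at the origin, matching their coefficients forces precisely $w_0(0) = 0$ and $w_0'(0) = s$. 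This places $w_0 \in \mathcal{H}_s$ and completes the representation $G = \theta w_0$.

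The main obstacle is analytic rather than algebraic: the Leibniz computation on $\theta w_0$ and the multiplicability identity $N(\theta w_0) = \theta N(w_0)$ both require a workable notion of products between the Heaviside function and the possibly irregular image $N(w_0)$. For $w_0$ smooth on $[0,\infty)$ with well-defined one-sided limits at $t=0$ the Leibniz formula is the standard jump relation and presents no difficulty, but the defining property of $\mathcal{N}$ has to be interpreted within a nonlinear distributional framework such as those of Colombeau or Egorov cited in the introduction. Once this interpretation is fixed, the matching of $\delta$ and $\delta'$ coefficients is purely algebraic and yields the claim.
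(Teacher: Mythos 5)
Your argument is correct and rests on exactly the same two ingredients as the paper's proof --- the distributional jump formula $(\theta w_0)'' = \theta w_0'' + w_0'(0)\delta + w_0(0)\delta'$ and the multiplicability property $N(\theta w_0) = \theta\, N(w_0)$ --- but you run the computation in the opposite direction, and the difference is worth noting. The paper starts from a given $w_0 \in \mathcal{H}_s$, sets $G := \theta w_0$, and verifies $G \in \mathcal{G}_s$; strictly speaking this establishes the inclusion $\theta\cdot\mathcal{H}_s \subseteq \mathcal{G}_s$, i.e.\ that Green's functions of the stated form \emph{exist}, rather than that every element of $\mathcal{G}_s$ has that form. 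You instead start from an arbitrary $G \in \mathcal{G}_s$, observe that off the origin the delta vanishes so $G$ solves the homogeneous ODE on $(0,\infty)$, and then recover the Cauchy data $w_0(0)=0$, $w_0'(0)=s$ by matching coefficients of the linearly independent distributions $\delta$ and $\delta'$. That proves the inclusion $\mathcal{G}_s \subseteq \theta\cdot\mathcal{H}_s$, which is what the theorem as worded actually asserts, so your route is the more faithful one; what it costs is the implicit regularity assumption that $G$ and $G'$ admit one-sided limits at $0^+$ so that the jump formula applies, which you should state. One caveat, inherited from the paper's definitions rather than introduced by you: since you identify $w_0$ with $G$ on $[0,\infty)$, your conclusion $w_0'(0)=s$ says $G'(0^+)=s$, which sits uneasily with the condition $\frac{dG}{dt}\big|_{t=0}=0$ in the definition of $\mathcal{G}_s$. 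The paper reconciles this only formally, by reading the initial data of $G$ through $G'=w_0(0)\delta+\theta\, w_0'$ together with the convention $\theta(0)=0$; you should at least flag that the two occurrences of ``derivative at $0$'' are being interpreted differently, since otherwise your $w_0:=G$ appears to satisfy two incompatible initial conditions.
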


\begin{proof}
This is shown by a direct substitution of (\ref{Greenrep}) into the homogeneous equation
\begin{equation}\label{homogeq}
\theta\left(t\right) \cdot \frac{d^2 w_0}{d t^2} + \theta\left(t\right) \cdot N\left(w_0\right) = 0
\end{equation}
which holds for any element $w_0 \in \mathcal{H}_s$. Then, since for any $N \in \mathcal{N}$,
\[
\theta\left(t\right) \cdot N\left(w_0\right) = N\left(\theta \cdot w_0\right) = N\left(G\right),
\]
and for any $w_0 \in \mathcal{H}_s$,
\[
\frac{d^2 \left(\theta \cdot w_0\right)}{d t^2} = w_0\left(0\right) \delta'\left(t\right) + \frac{d w_0}{d t}\bigg|_{t = 0} \delta\left(t\right) + \theta\left(t\right) \cdot \frac{d^2 w_0}{d t^2} = \theta\left(t\right) \cdot \frac{d^2 w_0}{d t^2} + s \delta\left(t\right)
\]
(in the sense of distributions), then (\ref{homogeq}) is reduced to
\[
\frac{d^2 G}{d t^2} + N\left(G\right) = s \delta\left(t\right).
\]
Taking into account that in the sense of distributions
\[
\frac{d G}{d t} = w_0\left(0\right) \delta\left(t\right) + \theta\left(t\right) \frac{d w_0}{d t},
\]
and using the above definition of the Heaviside function, we see that $G$ satisfies homogeneous Cauchy conditions. Thus, any $G$ defined by (\ref{Greenrep}), belongs to $\mathcal{G}_s$.
\end{proof}

\begin{remark}
Note that (\ref{Greenrep}) has theoretical and computational advantages. While the method of the nonlinear theory of distributions must be employed for the characterization of the elements of $\mathcal{G}_s$, the characterization of the elements of $\mathcal{H}_s$ is much simpler. On the other hand, it is less costly to approximate $w_0 \in \mathcal{H}_s$, than $G \in \mathcal{G}_s$.
\end{remark}

\subsection{Particular forms of $N \in \mathcal{N}$}

In this section we find some families of $N$ possessing the multiplicability property
\begin{equation}\label{multyprop}
N\left(\theta \cdot w_0\right) = \theta\left(t\right) \cdot N\left(w_0\right).
\end{equation}
The following theorem lists some particular forms of $N$ satisfying (\ref{multyprop}).

\begin{theorem}
All the functions below possesses the generalized multiplicability property (\ref{multyprop}):
\begin{equation}\label{powernonlin}
N\left(w\right) = w^n, ~~ n \in \mathbb{N},
\end{equation}
\begin{equation}\label{sinhnonlin}
N\left(w\right) = \sinh w,
\end{equation}
\begin{equation}\label{tanhnonlin}
N\left(w\right) = \tanh w,
\end{equation}
\begin{equation}\label{sinnonlin}
N\left(w\right) = \sin w,
\end{equation}
\begin{equation}\label{tannonlin}
N\left(w\right) = \tan w,
\end{equation}
\begin{equation}\label{arcsinnonlin}
N\left(w\right) = \arcsin w,
\end{equation}
\begin{equation}\label{arctannonlin}
N\left(w\right) = \arctan w,
\end{equation}
\begin{equation}\label{arcsinhnonlin}
N\left(w\right) = \operatorname{arcsinh} w,
\end{equation}
\begin{equation}\label{arctanhnonlin}
N\left(w\right) = \operatorname{arctanh} w,
\end{equation}
\begin{equation}\label{lognonlin}
N\left(w\right) = \ln\left(1 + w\right).
\end{equation}
\end{theorem}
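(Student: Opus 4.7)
The plan is to reduce the multiplicability property (\ref{multyprop}) to the elementary pointwise condition $N(0) = 0$, and then verify this condition for each of the listed non-linearities. The key observation is that the Heaviside function is idempotent under pointwise multiplication: since $\theta$ takes only the values $0$ and $1$, one has $\theta(t)^k = \theta(t)$ for every $k \ge 1$, and equivalently $(\theta \cdot w)(t)$ agrees with $w(t)$ on $\{t > 0\}$ and vanishes on $\{t \le 0\}$.

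Using this, I would first verify directly that (\ref{multyprop}) is equivalent to the requirement $N(0) = 0$. Indeed, at a point $t > 0$ both sides of (\ref{multyprop}) reduce to $N(w(t))$, whereas at a point $t \le 0$ the left-hand side becomes $N(0)$ and the right-hand side is $0$. Having made this reduction, each entry of the list (\ref{powernonlin})--(\ref{lognonlin}) is dispatched by inspection: $0^n = 0$ for $n \in \mathbb{N}$; $\sinh 0 = \tanh 0 = \sin 0 = \tan 0 = 0$; $\arcsin 0 = \arctan 0 = \operatorname{arcsinh} 0 = \operatorname{arctanh} 0 = 0$; and $\ln(1 + 0) = 0$.

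As a more structural alternative, I would invoke Maclaurin expansions. Each listed function admits a power series $N(w) = \sum_{k \ge 1} a_k w^k$ with no constant term, and term-by-term multiplication together with $\theta^k = \theta$ gives $N(\theta \cdot w) = \sum_{k \ge 1} a_k \theta^k w^k = \theta \sum_{k \ge 1} a_k w^k = \theta \cdot N(w)$. This packaging makes clear that $\mathcal{N}$ contains, in fact, every function with $N(0) = 0$, a strengthening I would record as a remark accompanying the proof. I do not anticipate a serious obstacle; the only care required is to read (\ref{multyprop}) pointwise (equivalently, as applied to the regular function $\theta \cdot w$), so that the finite radius of convergence of the Taylor series for $\tan$ and the principal-branch restrictions on the inverse functions do not interfere with the argument.
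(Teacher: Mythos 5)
Your proposal is correct, and its main line of argument is genuinely different from (and more economical than) the paper's. The paper establishes $\theta^n = \theta$ by writing $\theta = \tfrac{1}{2}\left[1 + \operatorname{sign}(t)\right]$ and expanding with the binomial theorem, and then treats $\sinh$ (and, by the claim ``the same way,'' the remaining entries) via term-by-term manipulation of the Maclaurin series, exactly as in your \emph{alternative} packaging. Your primary route instead reads (\ref{multyprop}) pointwise and reduces it in one line to the condition $N(0) = 0$, using only that $\theta$ takes values in $\{0,1\}$. This buys three things: (i) it replaces the binomial computation of $\theta^n$ with the trivial observation of idempotence; (ii) it proves the stronger statement that $\mathcal{N}$ contains \emph{every} $N$ with $N(0)=0$, of which the ten listed functions are instances by inspection; and (iii) it sidesteps a real weakness of the series argument, namely that the Maclaurin expansions of $\tan$, $\arcsin$, $\operatorname{arctanh}$ and $\ln(1+w)$ have finite radius of convergence, so the paper's ``carried out in the same way as for $\sinh$'' is only literally valid where $\left|w(t)\right|$ stays inside that radius --- a caveat you correctly flag and the paper does not. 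The one point worth making explicit if you write this up is the one you already note: the identity must be interpreted as an equality of functions (compositions of $N$ with the regular function $\theta \cdot w$), not as a product of distributions; under that reading, which is also the paper's implicit one, your argument is complete.
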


\begin{proof}
Here we carry out the proof for (\ref{powernonlin}) and (\ref{sinhnonlin}), since for (\ref{tanhnonlin})--(\ref{lognonlin}) it is carried put in the same way as for (\ref{sinhnonlin}).

In order to show that
\begin{equation}\label{thetapower}
\left[ \theta \cdot w \right]^n\left(t\right) = \theta\left(t\right) \cdot w^n\left(t\right),
\end{equation}
it suffices to show that
\begin{equation}\label{thetan}
\theta^n\left(t\right) = \theta\left(t\right).
\end{equation}
To this aim, we use the representation
\[
\theta\left(t\right) = \frac{1}{2} \left[1 + \operatorname{sign}\left(t\right) \right],
\]
where
\[
\operatorname{sign}\left(t\right) = \frac{\left|t\right|}{t}, ~~ t \in \mathbb{R},
\]
is the sign function. Obviously, according to the definition,
\begin{equation}\label{signn}
\operatorname{sign}^n\left(t\right) = \begin{cases} 1, ~~~~~~~~~~ n ~ {\rm is ~ even}, \\ \operatorname{sign}\left(t\right), ~~ n ~ {\rm is ~ odd}. \end{cases}
\end{equation}
Then, using the Newton's binomial formula, we have
\[
\theta^n\left(t\right) = \frac{1}{2^n} \sum_{k = 0}^n \frac{n!}{k! \left(n - k\right)!} \operatorname{sign}^k\left(t\right),
\]
which, in view of (\ref{signn}), can be transformed to the following:
\[\begin{split}
\sum_{k = 0}^n \frac{n!}{k! \left(n - k\right)!} \operatorname{sign}^k\left(t\right) &= \sum_{{\rm even} ~ k = 0}^n \frac{n!}{k! \left(n - k\right)!} + \sum_{{\rm odd} ~ k = 0}^n \frac{n!}{k! \left(n - k\right)!} \cdot \operatorname{sign}\left(t\right) = \\
&= 2^{n - 1} \left[ 1 + \operatorname{sign}\left(t\right) \right].
\end{split}\]
Therefore, we eventually arrive at
\[
\theta^n\left(t\right) = \frac{1}{2} \left[1 + \operatorname{sign}\left(t\right) \right] = \theta\left(t\right),
\]
which implies (\ref{powernonlin}).

\begin{remark}
A direct consequence of (\ref{thetan}) is the following relation:
\[
\frac{d \theta^n}{d t} = n \theta^{n - 1}\left(t\right) \frac{d \theta}{d t} = \delta\left(t\right).
\]
We find the last equality also in \cite{Egorov1990}. Thus, we reduce the constraint
\[
n \theta^{n - 1}\left(0\right) = 1.
\]
\end{remark}

Next, in order to show that
\[
\sinh\left[ \theta\left(t\right) w\left(t\right) \right] = \theta\left(t\right) \sinh\left[w\left(t\right)\right],
\]
we use the series expansion
\[
\sinh x = \sum_{n = 0}^\infty \frac{x^{2n + 1}}{\left(2 n + 1\right)!}.
\]
Then, by virtue of (\ref{thetapower}), we have
\[
\sinh\left[ \theta\left(t\right) w\left(t\right) \right] = \sum_{n = 0}^\infty \frac{\left[ \theta\left(t\right) w\left(t\right) \right]^{2n + 1}}{\left(2 n + 1\right)!} = \theta\left(t\right) \cdot \sum_{n = 0}^\infty \frac{w^{2n + 1}\left(t\right)}{\left(2 n + 1\right)!} = \theta\left(t\right) \sinh\left[w\left(t\right)\right].
\]

\end{proof}

\begin{remark}
Apparently, in all cases above, the multiplicative property (\ref{multyprop}) is not affected if we add to $N$ the linear term $g w_0$, with a given $g$. Similarly, the linear combination of all the particular functions above also possesses (\ref{multyprop}). Moreover, the product of any number of functions (\ref{powernonlin})--(\ref{lognonlin}) of any power $n \in \mathbb{N}$, also possesses (\ref{multyprop}).
\end{remark}

\subsection{Hierarchies of related nonlinear ODEs}

Using the particular solutions of (\ref{multyprop}) derived in the previous section, we provide some hierarchies of second order nonlinear equations allowing the representation (\ref{Greenrep}).

The relation (\ref{thetapower}) provides that (\ref{Greenrep}) is valid for the following sets equations:
\[
\frac{d^2 w}{d t^2} \pm w^n = f\left(t\right), ~~ n \in \mathbb{N},
\]
\[
\frac{d^2 w}{d t^2} \pm w^n \pm w^m = f\left(t\right), ~~ n, m \in \mathbb{N},
\]
\[
\frac{d^2 w}{d t^2} \pm \left( w^n \pm w^m \right)^k = f\left(t\right), ~~ n, m, k \in \mathbb{N}.
\]

The relations (\ref{sinhnonlin})--(\ref{lognonlin}) provide that (\ref{Greenrep}) is valid for the following sets equations:
\[
\frac{d^2 w}{d t^2} \pm \left( \sinh^n w \pm \sinh^m w \right)^k = f\left(t\right), ~~ n, m, k \in \mathbb{N},
\]
\[
\frac{d^2 w}{d t^2} \pm \left( \sinh^n w \pm \tanh^m w \right)^k = f\left(t\right), ~~ n, m, k \in \mathbb{N},
\]
\[
\frac{d^2 w}{d t^2} \pm \left( \tanh^n w \pm \tanh^m w \right)^k = f\left(t\right), ~~ n, m, k \in \mathbb{N},
\]
\[
\frac{d^2 w}{d t^2} \pm \sinh^n w \cdot \tanh^m w = f\left(t\right), ~~ n, m \in \mathbb{N}, ~~ n \geq m,
\]
\[
\frac{d^2 w}{d t^2} \pm \frac{\sinh^n w}{\tanh^m w} = f\left(t\right), ~~ n, m \in \mathbb{N}, ~~ n \geq m,
\]
\[
\frac{d^2 w}{d t^2} \pm \sinh^n w \pm w^m = f\left(t\right), ~~ n, m \in \mathbb{N},
\]
\[
\frac{d^2 w}{d t^2} \pm \tanh^n w \pm w^m = f\left(t\right), ~~ n, m \in \mathbb{N},
\]
\[
\frac{d^2 w}{d t^2} \pm \sinh^n w \cdot w^m = f\left(t\right), ~~ n, m \in \mathbb{N},
\]
\[
\frac{d^2 w}{d t^2} \pm \tanh^n w \cdot w^m = f\left(t\right), ~~ n, m \in \mathbb{N},
\]
\[
\frac{d^2 w}{d t^2} \pm \frac{\sinh^n w}{w^m} = f\left(t\right), ~~ n, m \in \mathbb{N}, ~~ n \geq m,
\]
\[
\frac{d^2 w}{d t^2} \pm \frac{\ln^n w}{\tanh^m w} \pm \sinh^k w \pm \left( \sin^p w^q \pm \tanh^r w^s \right)^l = f\left(t\right), ~~ k, l, n, m, p, q, r, s \in \mathbb{N}, ~~ n \geq m,
\]
and other combinations of power, sine, tangent, hyperbolic sine, hyperbolic tangent, inverse sine, inverse tangent, logarithmic functions.

\begin{remark}
The representation (\ref{Greenrep}) can be established for nonlinear PDEs that are reduced to second order nonlinear ODEs of the form (\ref{Frascaeq}) using, for instance, the traveling wave ansatz, the generalized separation of variables \cite{Khurshudyan2018}, etc.
\end{remark}

\section{The case of Liouville equation}\label{Liouville}

Apparently, the case $N\left(w\right) = \exp w$ of the Liouville equation \cite{Teschner:2001rv} seems to be at odd with the preceding discussion. Indeed, one can incur in a difficulty like
\begin{equation}\label{expnonlin}
\exp\left[ \theta\left(t\right) w_0\left(t\right) \right] = \sum_{n = 0}^\infty \frac{\left[ \theta\left(t\right) w_0\left(t\right) \right]^n}{n!} = 1 + \theta\left(t\right) \cdot \sum_{n = 1}^\infty \frac{w_0^n\left(t\right)}{n!} = 1 + \theta\left(t\right) \cdot \exp\left[ w_0\left(t\right) \right],
\end{equation}
and a direct solution does not seem to be available. The reason is that in this case there are two contributions to the solution. This can be written as
\begin{equation}\label{expnonlinsol}
G\left(t\right) = 2 \theta\left(t\right) \ln\left[\frac{\sqrt{2\epsilon}}{\cosh\left(\sqrt{\epsilon} t + \phi\right)}\right] + 2 \theta\left(-t\right) \ln\left[\frac{\sqrt{2\epsilon}}{\cosh\left(\sqrt{\epsilon} t - \phi\right)}\right]
\end{equation}
being $\epsilon$ and $\phi$ two constants fixed in such a way that
\begin{equation}
    -\tanh \phi = \frac{1}{4\sqrt{\epsilon}}.
\end{equation}

Consequences of this kind of solution could be that 2D quantum gravity is not confining but this would take us too far away from the content of the paper.

\section{Numerical analysis}

In this section we are going to show how the representation (\ref{Greenrep}) is applied to study particular nonlinear equations. Consider the ODE
\begin{equation}\label{nonhomopart}
\frac{d^2 w}{d t^2} + \sinh w = f\left(t\right).
\end{equation}
The general solution of the homogeneous equation
\[
\frac{d^2 w_0}{d t^2} + \sinh w_0 = 0
\]
subject to the following Cauchy conditions:
\[
w_0\left(0\right) = 0, ~~ \frac{d w_0}{d t}\bigg|_{t = 0} = s,
\]
reads as follows:
\[
w_0\left(t\right) = -2 i \operatorname{am}\left[\frac{i s}{2} t, -\frac{4}{s^2}\right],
\]
where $\operatorname{am}$ is the Jacobi amplitude function. Therefore, the nonlinear Green's function of the non-homogeneous equation (\ref{nonhomopart}) reads as follows:
\[
G\left(t\right) = -2i \theta\left(t\right) \cdot \operatorname{am}\left[\frac{i s}{2} t, -\frac{4}{s^2}\right].
\]
Thus, the general solution of (\ref{nonhomopart}) reads as follows:
\[
w\left(t\right) = -2i \sum_{k = 0}^\infty \alpha_k \int_0^t \tau^k \theta\left(\tau\right)  \operatorname{am}\left[\frac{i s}{2} \tau, -\frac{4}{s^2}\right] f\left(t - \tau\right) {\rm d}\tau.
\]

Now let us evaluate the numerical error of approximation of (\ref{nonhomopart}) by the finite sum
\[
w_K\left(t\right) \approx -2i \sum_{k = 0}^K \alpha_k \int_0^t \tau^k \theta\left(\tau\right)  \operatorname{am}\left[\frac{i s}{2} \tau, -\frac{4}{s^2}\right] f\left(t - \tau\right) {\rm d}\tau,
\]
when $t \in \left[0, 1\right]$. To this aim, we quantify the logarithmic error
\[
\operatorname{Er}\left(K; t\right) = \ln\left| w_K\left(t\right) - w_{\rm MoL}\left(t\right) \right|,
\]
where $w_{\rm MoL}$ is the numerical solution of (\ref{nonhomopart}) with the method of lines \cite{Schiesser1991}.

Figure \ref{fig1} below shows the evolution of $\operatorname{Er}$ when $K$ increases from $1$ to $4$, in the particular case when $s = 1$ and $f\left(t\right) = \delta\left(t\right)$.

\begin{figure}[h!]
\centerline{\includegraphics[width = 3.2in]{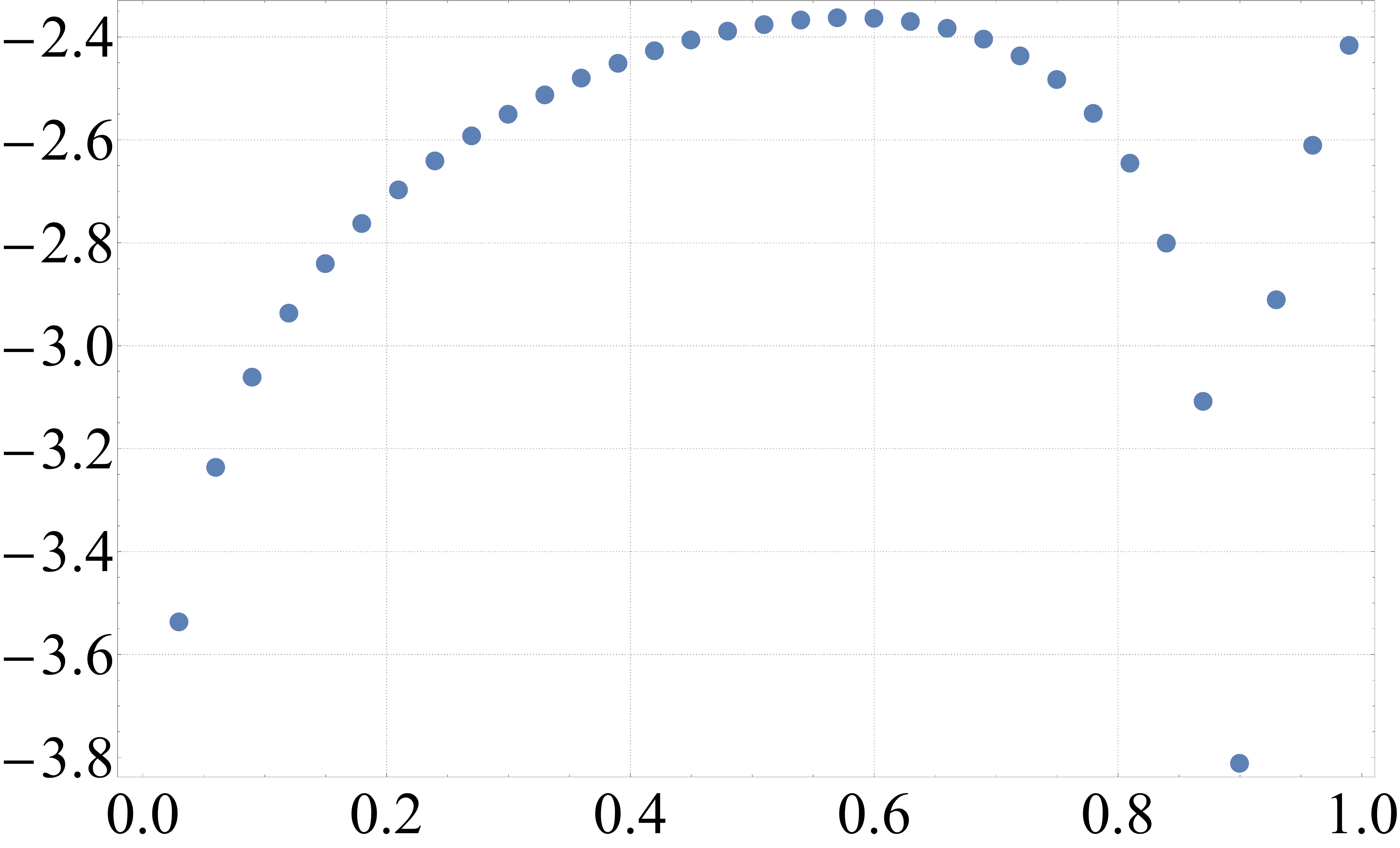} ~ \includegraphics[width=3.2in]{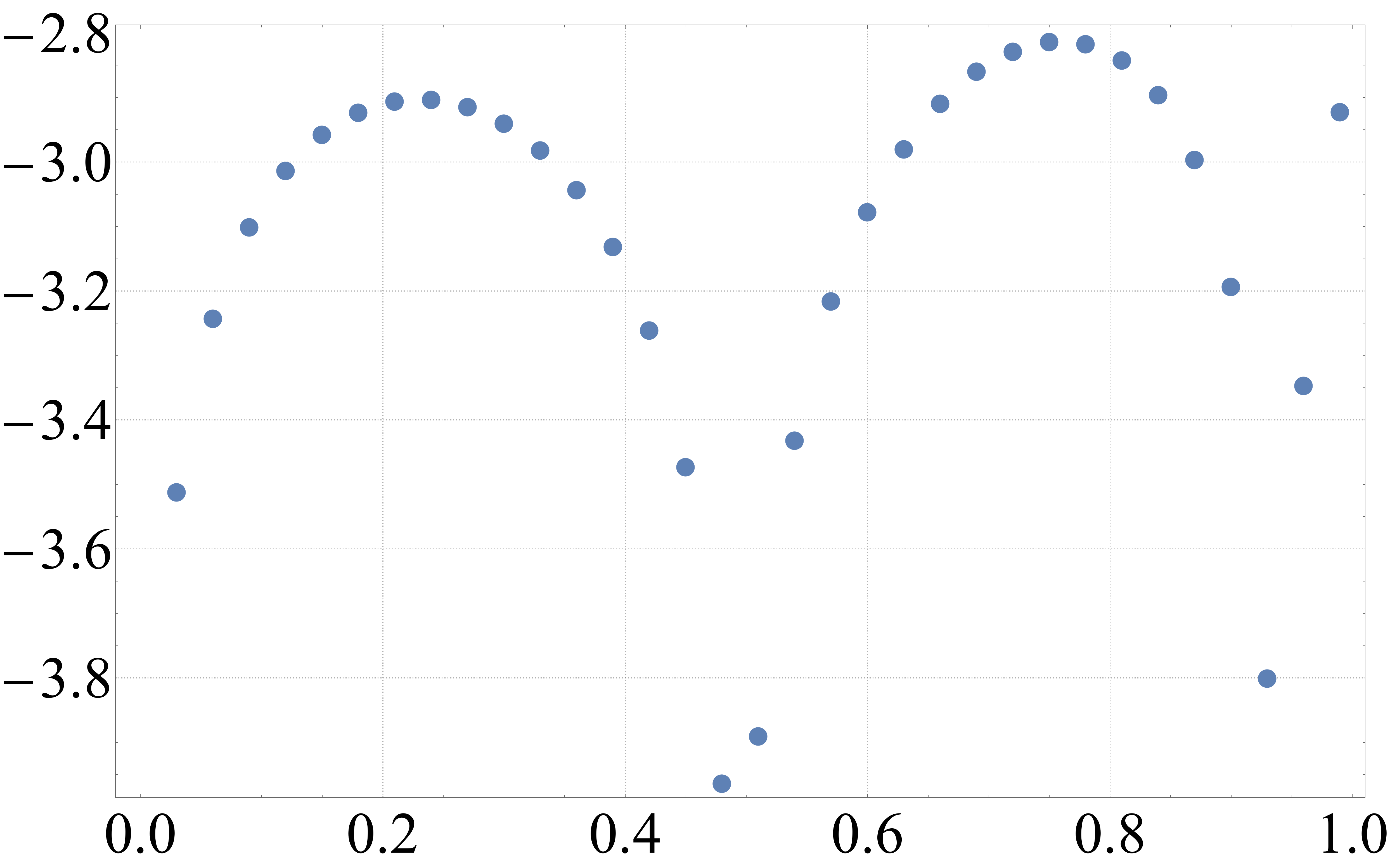}}
\bigskip
\centerline{\includegraphics[width = 3.2in]{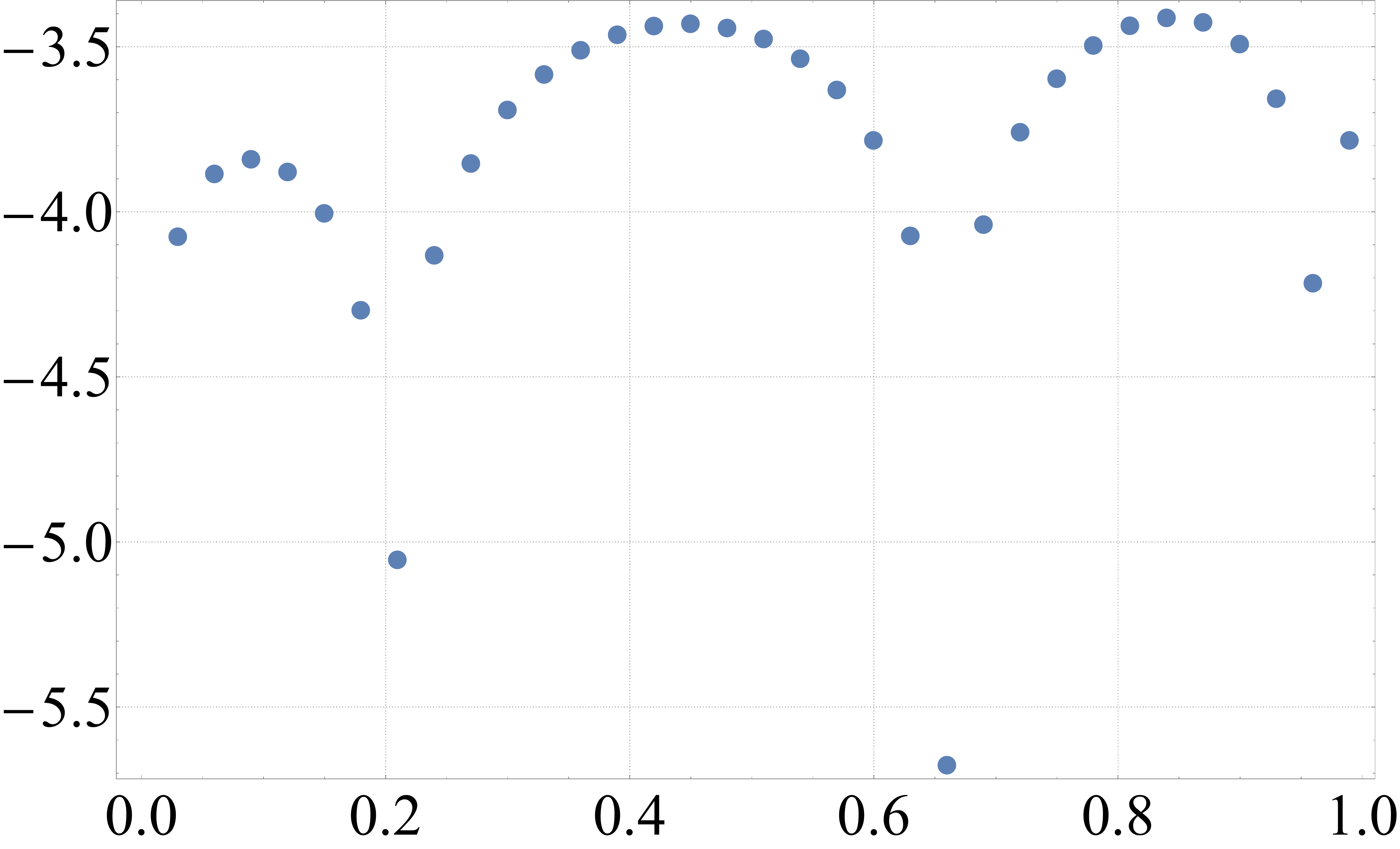} ~ \includegraphics[width=3.2in]{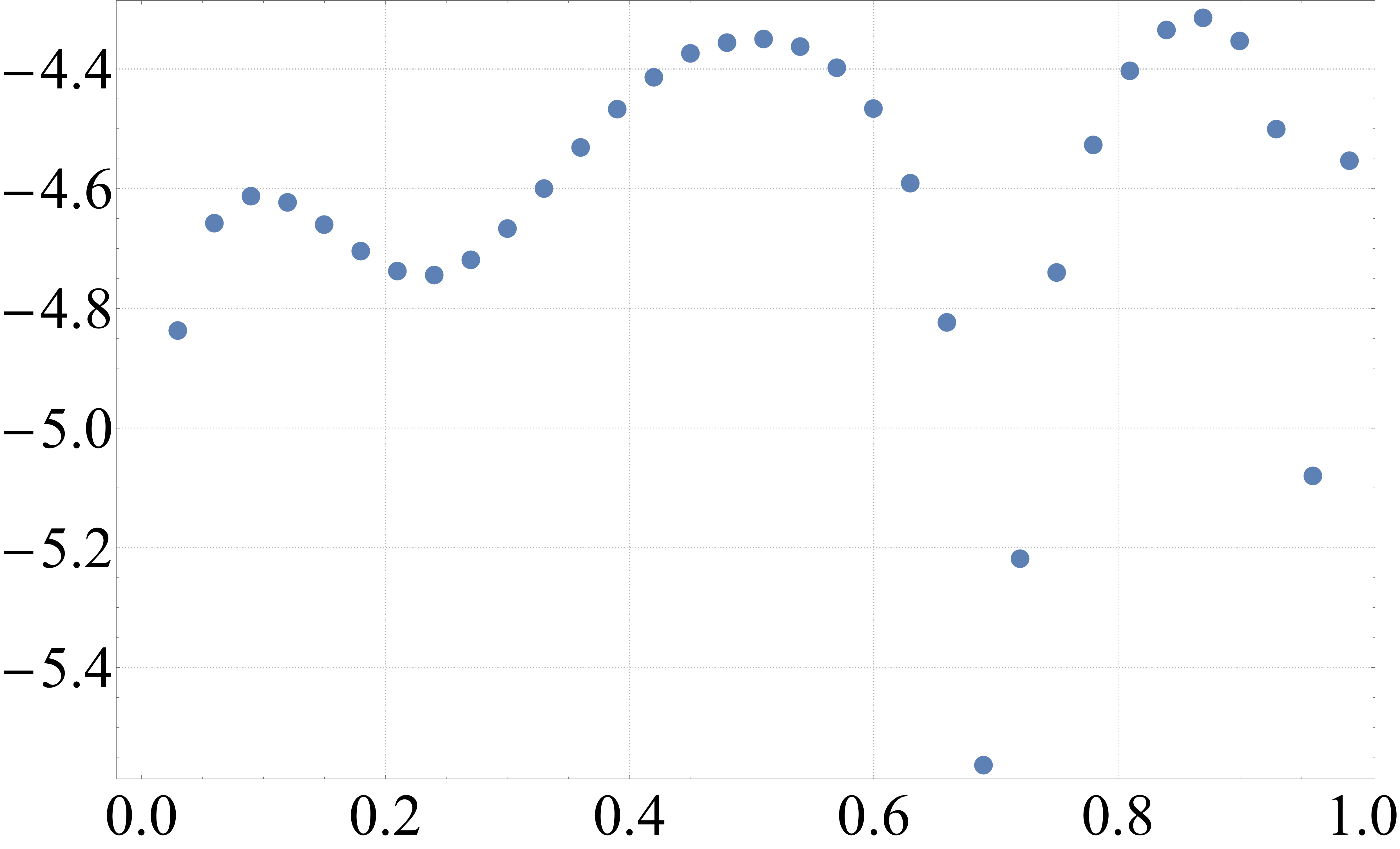}}

\caption{Discrete plots of $\operatorname{Er}\left(1; t\right)$, $\operatorname{Er}\left(2; t\right)$, $\operatorname{Er}\left(3; t\right)$, and $\operatorname{Er}\left(4; t\right)$ (respectively) against $t \in \left[0, 1\right]$ for $f\left(t\right) = \delta\left(t\right)$: (\ref{nonhomopart})}
\label{fig1}
\end{figure}

We also aim to show that the nonlinear Green's function of the Liouville equation derived in the previous section:
\[
\frac{d^2 w}{d t^2} + \exp w = f\left(t\right).
\]
As it is seen from Fig. \ref{fig2}, expressing the evolution of $\operatorname{Er}$ when $K$ increases from $1$ to $4$, higher order terms of the short time expansion lead to significant improvement of the approximation.

\begin{figure}[h!]
\centerline{\includegraphics[width = 3.2in]{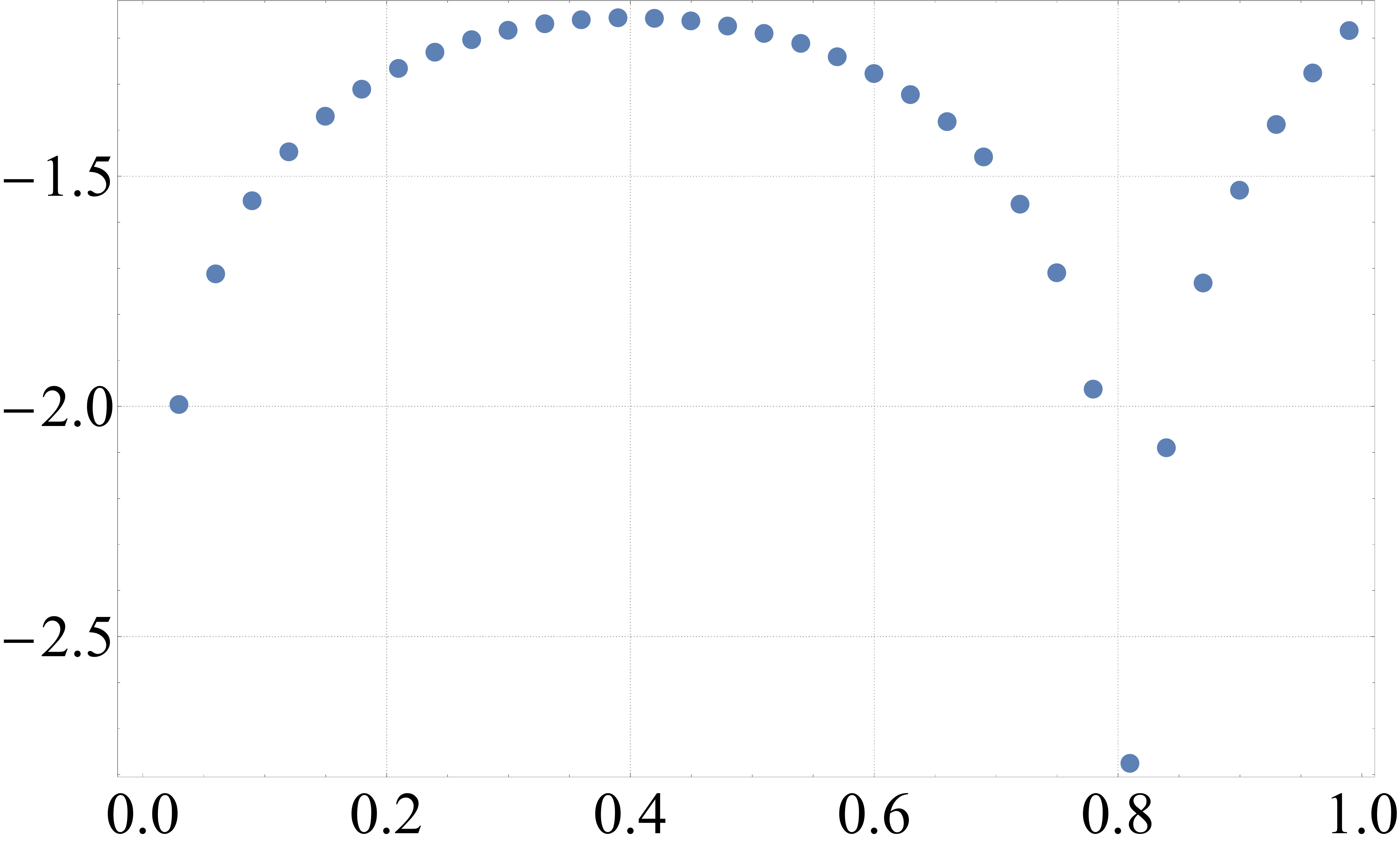} ~ \includegraphics[width=3.2in]{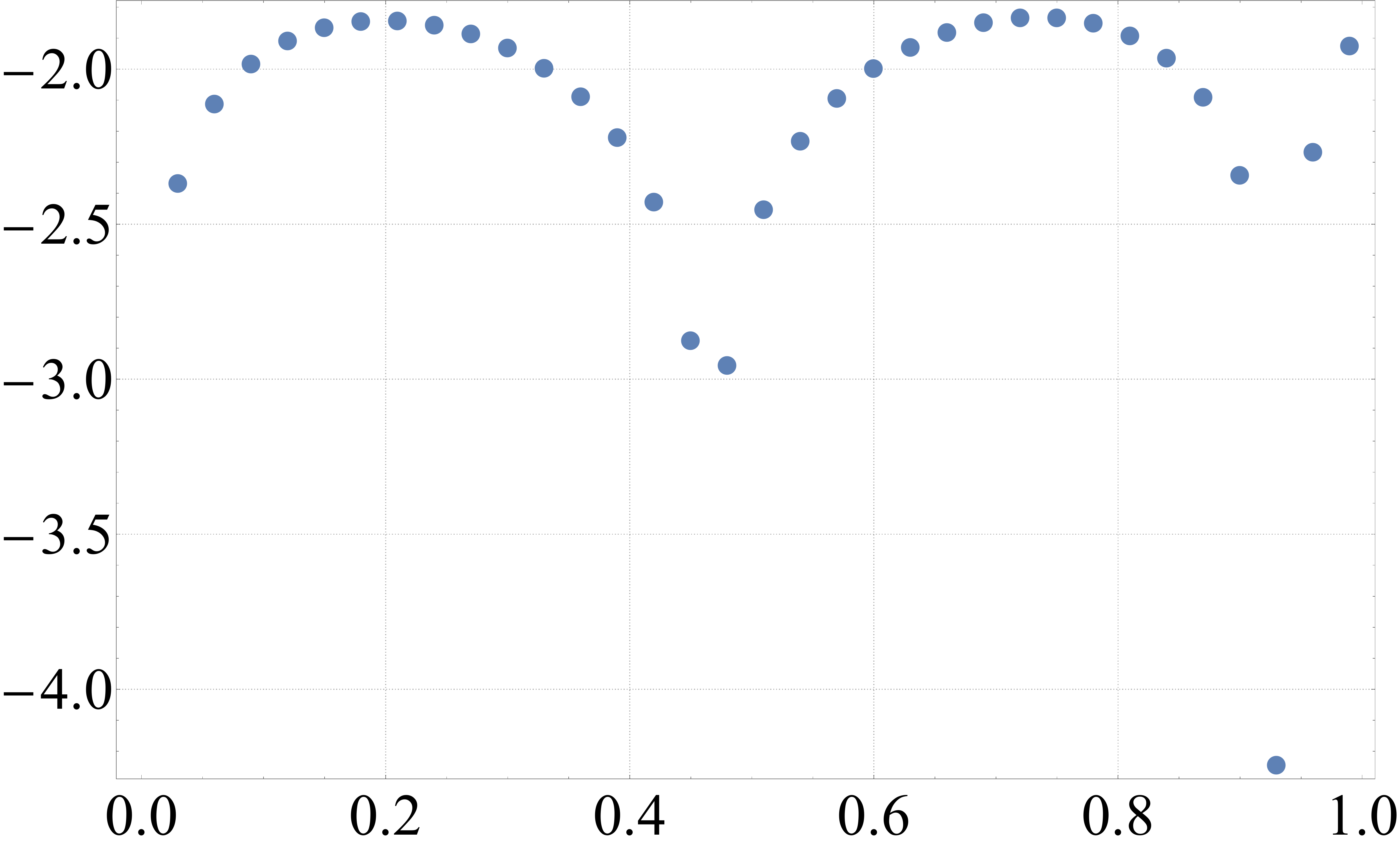}}
\bigskip
\centerline{\includegraphics[width = 3.2in]{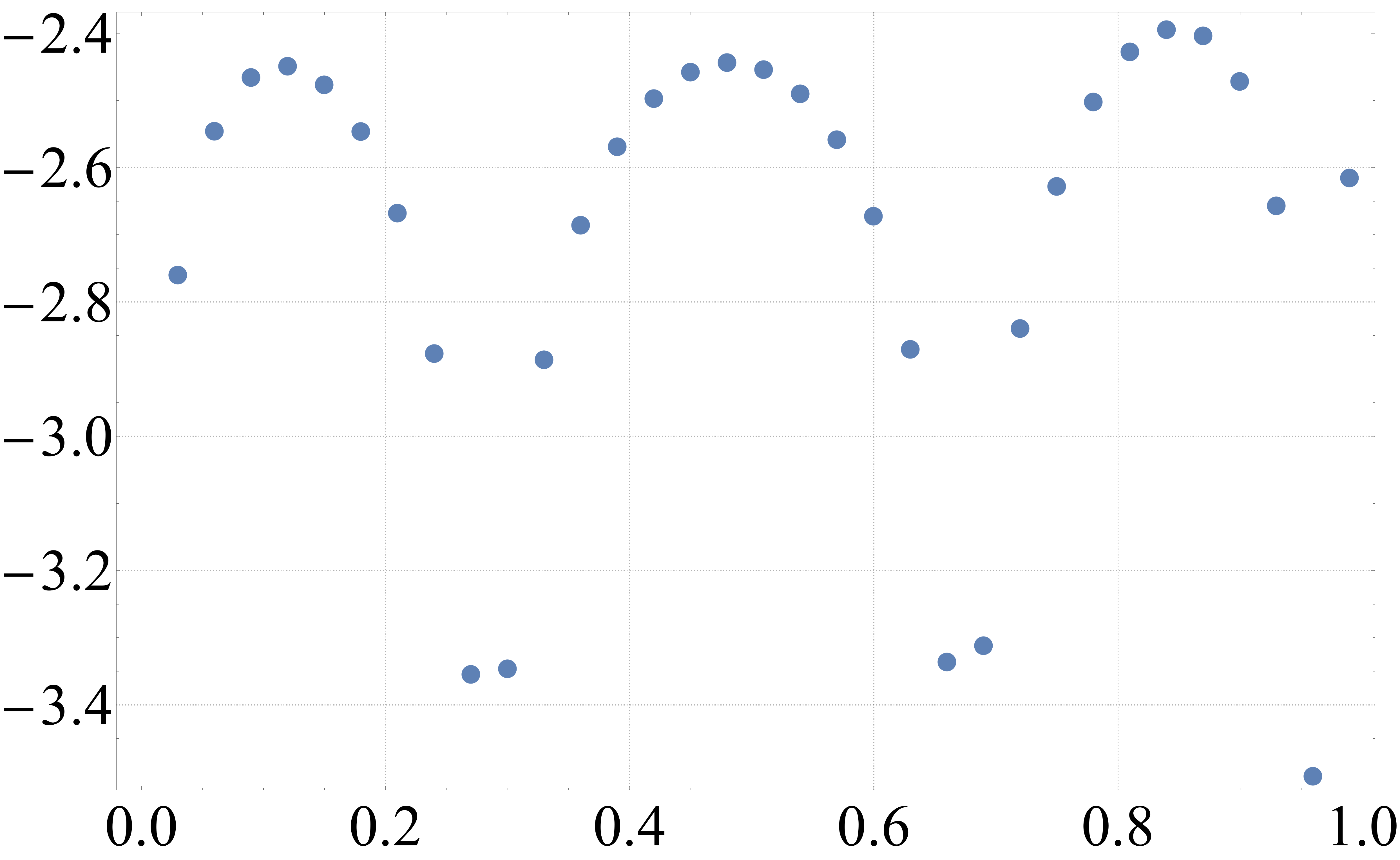} ~ \includegraphics[width=3.2in]{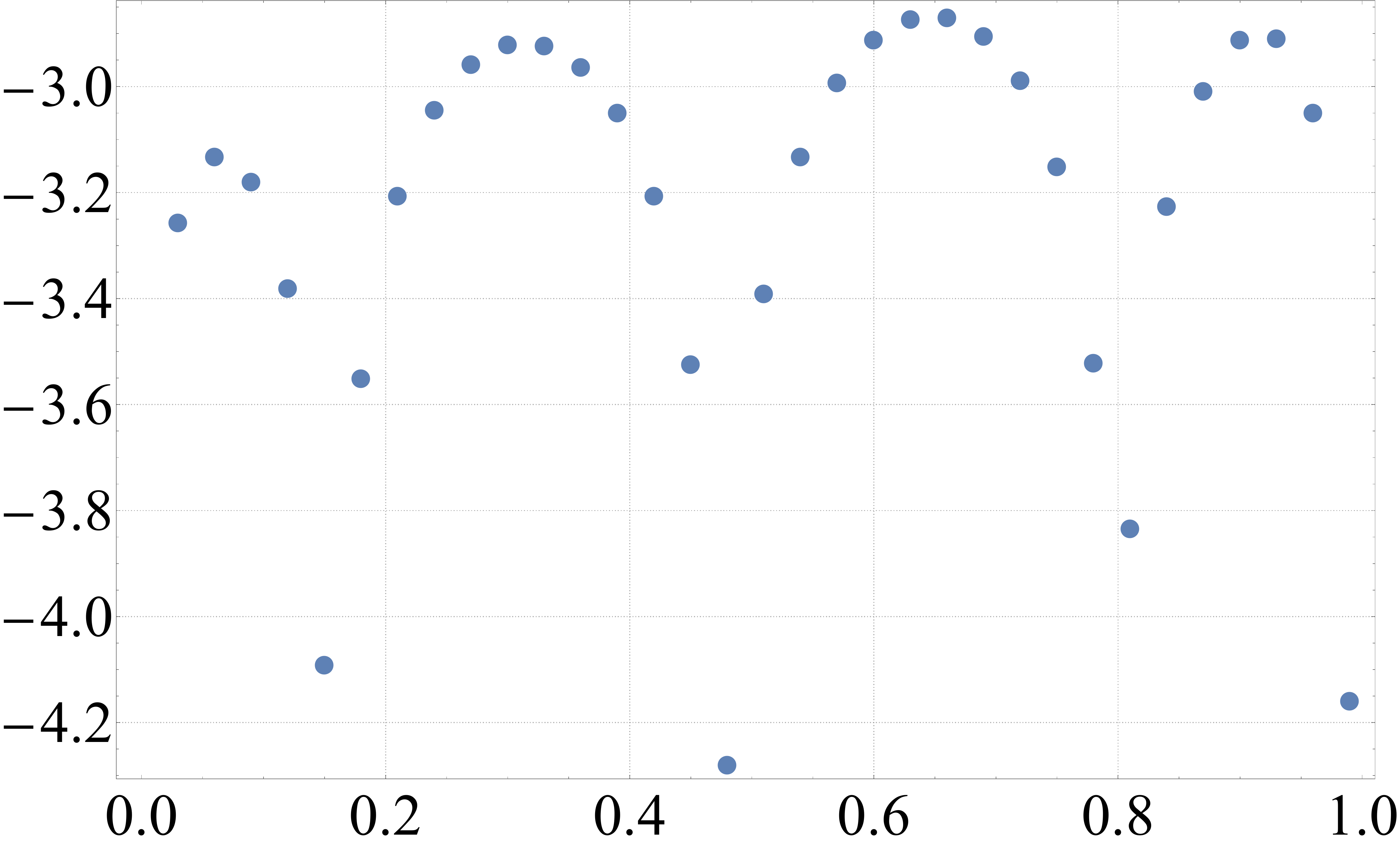}}

\caption{Discrete plots of $\operatorname{Er}\left(1; t\right)$, $\operatorname{Er}\left(2; t\right)$, $\operatorname{Er}\left(3; t\right)$, and $\operatorname{Er}\left(4; t\right)$ (respectively) against $t \in \left[0, 1\right]$ for $f\left(t\right) = \delta\left(t\right)$: Liouville equation}
\label{fig2}
\end{figure}

The logarithmic error decreases significantly with increase of $K$ also for other forms of $f$.

\section{Two open problems}

In this section we sum up some of the related open problems that we came across when studying particular solutions of (\ref{multyprop}).

\begin{enumerate}

\item The non-linearity (\ref{powernonlin}) is studied for $n \in \mathbb{N}$. However, numerical simulations show that (\ref{powernonlin}) satisfies (\ref{multyprop}) for any positive real $n$. This fact remains rigorously unproved.

\item The same concerns given in Section \ref{Liouville} seem to hold also for $\cos$, $\cosh$, $\cot$, $\coth$, $\arccos$, arccot, arccosh, arccoth functions. Further work is needed here as in the case of Liouville equation above.

\end{enumerate}

\section*{Conclusions}

We derive an equality type constraint for the nonlinear term of second order nonlinear ODEs providing an important representation formula for the nonlinear Green's function. More specifically, if the constraint holds, then the nonlinear Green's function is represented in the form of product of the Heaviside function and the general solution of the corresponding homogeneous equation. We also establish some particular solutions to the constraint allowing to represent the nonlinear Green's function of general hierarchies of ODEs in the mentioned form. The numerical comparison of the Green's function solution and the solution derived by the well-known numerical method of lines supports the efficiency of the result.

We also outline some open problems that will make a significant progress in the future.

\end{document}